\newtheorem{lemma}{Lemma}
\newtheorem{theorem}{Theorem}
\newtheorem{remark}{Remark}
\title{\LARGE \bf Pole placement design for quantum systems via coherent observers
\thanks{This research was supported by the Australian Research Council Centre of Excellence for Quantum Computation and Communication Technology (project number CE110001027), AFOSR Grant FA2386-12-1-4075, and the Australian Research Council Discovery Project program (Discovery project DP140101779).}}
\author{Zibo Miao,~Matthew~R.~James and Valery A. Ugrinovskii
\thanks{Z. Miao is with the Department of Electrical \& Electronic Engineering, The University of Melbourne, Parkville, Victoria 3010, Australia (email: Zibo.Miao@unimelb.edu.au).}
\thanks{M. R. James is with the ARC Centre for Quantum Computation and Communication Technology, Research School of Engineering, The Australian national University, Canberra ACT 2601, Australia (email:  Matthew.James@anu.edu.au).}
\thanks{V. A. Ugrinovskii is with the School of Engineering and Information Technology, University of New South Wales Canberra at the Australian Defence Force Academy, Canberra ACT 2600, Australia  (email: V.Ougrinovski@adfa.edu.au).}
}
\begin{document}

\maketitle

%


\begin{abstract}
We previously extended Luenberger's approach for observer design to the quantum case, and developed  a class of coherent observers which tracks linear quantum stochastic systems in the sense of mean values. In light of the fact that the Luenberger observer is commonly and successfully applied in classical control, it is interesting to investigate the role of coherent observers in quantum feedback. As the first step in exploring observer-based coherent control, in this paper we study pole-placement techniques for quantum systems using coherent observers, and in such a fashion, poles of a closed-loop quantum system can be relocated at any desired locations. In comparison to classical feedback control design incorporating the Luenberger observer, here direct coupling between a quantum plant and the observer-based controller are allowed to enable a greater degree of freedom for the design of controller parameters. A separation principle is presented, and we show how to design the observer and feedback independently to be consistent with the laws of quantum mechanics. The proposed scheme is applicable to coherent feedback control of quantum systems, especially when the transient dynamic response is of interest, and this issue is illustrated in an example. 
\end{abstract}

\section{Introduction}
The last two decades have witnessed evident growth in quantum technologies, with control of quantum systems being the main focus of quantum engineering \cite{WM10, Mabuchi05, DP10, PZMGUJ14, Hush:2013}. It is increasingly apparent that coherent feedback control may have significant advantages over measurement-based feedback \cite{SL00, HM08, HM12, NY14}. One of the main reasons is that, measurements of quantum systems tend to be noisy ascribed to macroscopic read-out devices, and measurement based controllers are classical systems that cannot be realised by quantum hardwares \cite{Hush:2013a}. By contrast, a coherent feedback controller can be designed on the same scale as the quantum plant. On the other hand, measuring a quantum system can cause irreversible damage due to  the nature of quantum mechanics \cite{WM10}. However, a coherent feedback controller directly makes use of non-commutative quantum signals, without the need for a measurement device, and therefore this scheme retains the coherence in the whole process \cite{WM94b, HM12, KAKKCSL13}.

Although a remarkable development of coherent feedback control has been achieved recently, this framework is still in its incipient stage. For that matter, it lacks many tools available in classical or measurement-based control. It is well established that estimating a classical linear dynamic system from a series of noisy measurements using the Kalman filter can provide improved performance over direct feedback schemes \cite{AM79,AM89}. Unfortunately, traditional techniques do not appear to be applicable to coherent feedback due to difficulties with quantum conditioning onto non-commutative subspaces of signals  \cite{AMPJ15,VPB93,BHJ07, NJP09, VA14}. Furthermore, due to the Heisenberg uncertainty principle, there is a fundamental limit for the mean squared estimation error. 

The closest option instead of least squares estimators for quantum systems is so called coherent observers. A coherent observer, as a quantum system itself described by quantum stochastic differential equations (QSDEs) in the Heisenberg picture, is driven by the output field of the quantum plant without measurements involved. In our previous work, we extended Luenberger's approach for observer design to the quantum case, and developed a class of coherent observers that can track quantum stochastic systems asymptotically in the sense of mean values \cite{MJ12, MEPUJ13, MHJ15J}.  Observers are already of importance in classical control \cite{DL66,GE02}.
For example, in classical control theory, the pole-placement problem has attracted enormous interest with some systematical approaches developed \cite{ZDG95, KO10}. It is widely shared that, in many practical situations, only a limited number of plant variables to be controlled are available for measurement. Fortunately, these plant variables can be reconstructed by the Luenberger observer, and the observer may be incorporated in classical feedback control design \cite{DL64,DL66}.  Hence, we expect coherent quantum observers will have similar utility in quantum feedback control if it is designed so that it possesses the property that the Luenberger observer has.  In particular, if we desire an observable (a self-adjoint operator defined on a Hilbert space to represent physical quantities in quantum mechanics) to be asymptotically stable in the mean with specific transient response, a mean tracking coherent observer can be employed to provide a reliable estimate.
In addition, a coherent observer and the plant are correlated in the sense that some quantum features can be observed in the joint system \cite{MJ12,MHJ15J}. As the first step in studying observer-based coherent feedback control design, in this paper we are concerned with the pole-placement technique using coherent observers. The pole-placement technique is essential in the quantum case as it is related to transient responses of quantum systems, which is an important topic in quantum engineering, and certain quantumness can be made salient by manipulating the closed-loop poles \cite{HRHJ13, GNW10, KB06 }. The observer-based feedback control protocol is shown in Fig. \ref{fig:obs_con}, and all the information flows are at the quantum level.
As opposed to the classical case, a quantum output feedback controller of the following form
\begin{align*}
u = -\; \text{gain} \; y + \text{noise}
\end{align*}
cannot be physically realised \cite{JNP08, NJP09}. For that reason, a coherent observer plays a vital role in coherent feedback control design aimed at reassigning the poles. Furthermore, we allow for direct coupling (represented by the black double headed arrow in Fig. \ref{fig:obs_con}) between a linear quantum plant and the corresponding coherent observer \cite{IRP14, ZJ11}, and thus a greater degree of freedom for the controller design can be obtained. A separation principle is provided in this scenario, and we show how to design the observer and feedback independently consistent with the laws of quantum mechanics. The proposed approach in this paper can be applied to a variety of quantum systems (e.g., optomechanical systems), with the purpose of tuning parameters and improving the transient response performance.
\begin{figure}[!htp]
\centering
\includegraphics[scale=.5]{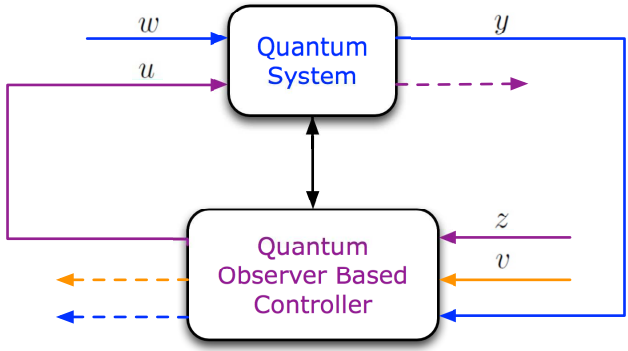}
\caption{Coherent quantum observer-based feedback control design.}
\label{fig:obs_con}
\end{figure}

The paper is organised as follows. We begin in Section \ref{sec:LQSS} by presenting linear quantum stochastic systems as quantum plants considered in this paper. In Section \ref{sec:OFCATPT}, we propose an observer-based approach to the pole-placement problem in a coherent fashion. This is followed by a specific example in Section \ref{sec:IE}, which illustrates the design and performance of observer-based feedback control. Section \ref{sec:C} provides some concluding remarks and future research directions.

{\em Notations}. In this paper $\ast$ is used to indicate the Hilbert space adjoint $X^{\ast}$ of an operator $X$, as well as the complex conjugate $z^{\ast}=x-iy$ of a complex number $z=x+iy$ (here, $i=\sqrt{-1}$ and $x,y$ are real). Real and imaginary parts are denoted by $\Re\left(z\right)=\frac{z+z^{\ast}}{2}$ and $\Im\left(z\right)=\frac{z-z^{\ast}}{2i}$ respectively. The conjugate transpose $A^\dagger$ of a matrix $A=\left\{ a_{ij}\right\} $ is defined by $A^{\dagger}=\left\{ a_{ji}^{\ast}\right\} $. Also defined are the conjugate  $A^{\sharp}=\left\{ a_{ij}^{\ast}\right\} $ and the transpose $A^{T}=\left\{ a_{ji}\right\} $ matrices, so that $A^{\dagger}=\left(A^T\right)^{\sharp}=\left(A^{\sharp}\right)^T$. $\textrm{det}\left(A\right)$ denotes the determinant of a matrix $A$, and $\textrm{tr}\left(A\right)$ represents the trace of $A$. The mean value (quantum expectation) of an operator $X$ in the state $\rho$ is denoted by $\left\langle X \right\rangle = \mathbb{E}_{\rho}\left[X\right]=\mathrm{tr}\left(\rho X\right)$. The commutator of two operators $X,Y$ is defined by $\left[X,Y\right]=XY-YX$. The anticommutator of two operators $X,Y$ is defined by $\left\{X,Y\right\}=XY+YX$. The tensor product of operators $X,Y$ defined on Hilbert spaces $\mathbb{H},\mathbb{G}$ is denoted $X \otimes Y$, and is defined on the tensor product Hilbert space $\mathbb{H}\otimes\mathbb{G}$. $I_n$ ($n\in\mathbb{N}$) denotes the $n$ dimensional identity matrix. $0_n$ ($n\in\mathbb{N}$) denotes the $n$ dimensional zero matrix.

\section{Linear quantum stochastic systems}
\label{sec:LQSS}

The dynamics of an open quantum system are uniquely determined by the triple $(S,L,\mathcal{H})$ \cite{KP92,GJ09J1}. The self-adjoint operator $\mathcal{H}$ is the Hamiltonian describing the self-energy of the system. The unitary matrix $S$ is a scattering matrix, and the column vector $L$  with operator entries is a coupling vector. $S$ and $L$ together specify the interface between the system and the fields. We assume there is no interaction between different fields, and thus hereafter we assume $S$ to be the identity matrix without loss of generality \cite{JNP08, NJP09, GZ00}.

Given an operator $X$ defined on the initial Hilbert space $\mathbb{H}$, the Heisenberg evolution is defined by
\begin{align}
dX = & \left(\mathcal{L}\left(X\right)-i\left[X,\mathcal{H}\right]\right)dt \nonumber\\
&+ \frac{1}{2}\left([X,L]-[X,L^\dagger]
\right)dW_1  \nonumber \\
&-\frac{i}{2}\left([X,L]+[X,L^\dagger]
\right)dW_2,
\label{eq:SLHqua}
\end{align}
with
\begin{align*}
\left[\begin{array}{c} W_1 \\  W_2 \end{array}\right] = \left[\begin{array}{c} W+W^\sharp  \\ -i(W-W^\sharp) \end{array}\right],
\end{align*}
and 
\begin{align}
\mathcal{L}\left(X\right)=\frac{1}{2}L^{\dagger}\left[X,L\right]+\frac{1}{2}\left[L^{\dagger},X\right]L,
\label{eq:Lindblad}
\end{align}
which is called the Lindblad operator. The operators $W$ are defined on a particular Hilbert space called a Fock space $\mathbb{F}$.
Input field quadratures $W+W^{\sharp}$ and $-i\left(W-W^{\sharp}\right)$ are each equivalent to classical Wiener processes, but do not commute. A field quadrature can be measured using homodyne detection \cite{GZ00,GJ09J1}.  In addition, the quadrature form of the output fields is given by
\begin{align}
\left[\begin{array}{c} dY_1 \\  dY_2 \end{array}\right] = \left[\begin{array}{c} L+L^\sharp  \\ -i(L-L^\sharp) \end{array}\right] \,dt
+ \left[\begin{array}{c} dW_1 \\ dW_2 \end{array}\right].
\label{eq:outfields}
\end{align}

In this work we focus on open harmonic oscillators as quantum plants. The dynamics of each oscillator are described by two self-adjoint operators position $q_j$ and momentum $p_k$, which satisfy the canonical commutation relations $[q_j,p_k] = 2i \delta_{jk}$ where $\delta_{jk}$ is the Kronecker delta \cite{Nielsen:2010}. It is convenient to collect the position and momentum operators of the oscillators into an $n_x$-dimensional column vector $x\left(t\right)$, defined by  $x\left(t\right)=\left(q_{1}\left(t\right),p_{1}\left(t\right),q_{2}\left(t\right),p_{2}\left(t\right), \ldots,q_{\frac{n_x}{2}}\left(t\right),p_{\frac{n_x}{2}}\left(t\right)\right)^{T}$. 
In this case the commutation relations can be re-written as:
 \begin{equation}
 x\left(t\right)x\left(t\right)^{T}-\left(x\left(t\right)x\left(t\right)^{T}\right)^{T}=2i\Theta_{n_x} \label{cancom}
 \end{equation}
 where $\Theta_{n_x} = I_{\frac{n_x}{2}}\otimes J$ with $J=\left[\begin{array}{cc}0 & 1\\-1 & 0\end{array}\right]$. In general, $\Theta_{n} = I_{\frac{n}{2}}\otimes J$ for any even number $n\in\mathbb{N}$.
  
Linear quantum plants, as given below, in particular, are defined by having a quadratic Hamiltonian of the form $\mathcal{H}_p = \frac{1}{2}x^{T}R_px$ with $R_p$ being a $\mathbb{R}^{n_x \times n_x}$ symmetric matrix; coupling operators of the form $L_w=\Lambda_wx$ with $\Lambda_w$ being a $\mathbb{C}^{\frac{n_w}{2} \times n_x}$ matrix and $L_u=\Lambda_ux$ with $\Lambda_u$ being a $\mathbb{C}^{\frac{n_u}{2} \times n_x}$ matrix (here $n_x$, $n_w$, $n_u$ and $n_y$ are positive even numbers). If we use an $n_y$-dimensional column vector $y\left(t\right)$ to incorporate all the quadratures of the output fields then, based on \eqref{eq:SLHqua} and \eqref{eq:outfields}, the dynamics of a set of open harmonic oscillators can be described by the following linear QSDEs \cite{JNP08}:
\begin{subequations}
\label{eq:qsys}
\begin{align}
dx\left(t\right)&=Ax\left(t\right)dt+B_1dw\left(t\right) + B_2du\left(t\right),\\
dy\left(t\right)&=Cx\left(t\right)dt+dw\left(t\right),
\end{align}
\end{subequations}
where $A$, $B_1$, $B_2$, $C$ are $\mathbb{R}^{n_{x}\times n_{x}}$, $\mathbb{R}^{n_{x}\times n_{w}}$, $\mathbb{R}^{n_{x}\times n_{u}}$ and $\mathbb{R}^{n_{y}\times n_{x}}$ matrices respectively. The $n_u$-dimensional column vector  $u\left(t\right)$ denotes the quantum signal fed back to the plant from the observer-based controller, which we will explain in detail later. The coefficient matrices $A$, $B_1$, $B_2$ and $C$ satisfy the following physical realisability conditions (e.g., see \cite{JNP08}):
\begin{subequations}
\label{eq:qsyscommu}
\begin{align}
&A\Theta_{n_{x}}+\Theta_{n_{x}} A^{T}+B_1\Theta_{n_{w}} B_1^{T} + B_2\Theta_{n_{u}} B_2^{T}=0,\\
&B_1=\Theta_{n_{x}} C^{T}\Theta_{n_{w}}.
\end{align}
\end{subequations}
These algebraic constraints on the coefficient matrices of (\ref{eq:qsys}) were originally derived by requiring the communication relations hold for all times, a property enjoyed by open physical systems undergoing an overall unitary evolution \cite{JNP08, GZ00}.

\section{Observer-based feedback controller and the pole-placement technique}
\label{sec:OFCATPT}

A coherent observer is another system of quantum harmonic oscillators which we engineer such that at least the system variables track those of the quantum plant asymptotically in the sense of mean values \cite{MJ12, VA14, MHJ15J}. In classical control theory, it is well known that if not all state variables of a linear plant are available for feedback, an observer may be needed for feedback design \cite{ZDG95, GE02}.  In this section we present a quantum coherent counterpart of the design method commonly called the pole-placement or pole-assignment technique, in which coherent observers are used to achieve a desired pole-placement of the quantum plant-observer system.

We explore whether an observer-based quantum controller exists, and if so how it can be designed to achieve the control goals. It will be shown that if the system considered is completely controllable, then poles of the closed-loop system may be placed at any desired locations by means of coherent feedback through an appropriate feedback gain matrix, with the assistance of a coherent quantum observer. 

By merely using field coupling, a coherent observer-based feedback controller,  has equations of the following form
\begin{subequations}
\label{eq:obs_con}
\begin{align}
d\hat{x}\left(t\right)&=F\hat{x}\left(t\right)dt+G_{1}dy\left(t\right)+G_{2}dz\left(t\right)+G_{3}dv\left(t\right),\\
du\left(t\right)&=H\hat{x}\left(t\right)dt+dz\left(t\right),
\end{align}
\end{subequations}
where the $n_x$-dimensional column vector $\hat{x}\left(t\right)$ denotes the \lq\lq{estimate}\rq\rq \  of $x\left(t\right)$. Also, matrices $G_1 \in \mathbb{R}^{n_x\times n_y}$, $G_2 \in \mathbb{R}^{n_x\times n_z}$ and $G_3 \in \mathbb{R}^{n_x\times n_v}$ respectively ($n_z$ and $n_v$ are positive even numbers). The quantum input signal $z\left(t\right)$ contributes to the controller output signal $u\left(t\right)$. Unlike the classical case, here an additional quantum noise signal $v\left(t\right)$ may be needed to ensure \eqref{eq:obs_con} correspond to a quantum physical system. The quantum noises input to \eqref{eq:obs_con} that do not contribute to $u\left(t\right)$ are also included to $v\left(t\right)$. 

Note that the system (\ref{eq:obs_con}) must satisfy the following physical realisability conditions \cite{MJ12}
\begin{subequations}
\label{eq:OCPR}
\begin{align}
&F\Theta_{n_{x}}+\Theta_{n_{x}}F^{T}+G_1\Theta_{n_{w}}G_1^{T}\nonumber\\
&+G_2\Theta_{n_z}G_2^{T} + G_3\Theta_{n_v}G_3^{T}=0,\\
&G_2=\Theta_{n_{x}}H^{T}\Theta_{n_z}.
\end{align}
\end{subequations}
The $\mathbb{R}^{n_x\times n_x}$ matrix $F$ is given by 
\begin{align}
F = A - G_1C + B_2H.
\label{eq:formF}
\end{align}
To see why $F$ is of the form as \eqref{eq:formF} shows, we define the observer error as $e\left(t\right)=x\left(t\right)-\hat{x}\left(t\right)$,  and then the mean value of $e\left(t\right)$ evolves as 
\begin{align}
\langle \dot{e}\left(t\right)\rangle=\left(A-G_1C\right)\langle e\left(t\right)\rangle
\label{eq:err}
\end{align}
by comparing (\ref{eq:qsys}) and (\ref{eq:obs_con}). 
Therefore, $\langle e\left(t\right)\rangle$ will converge to zero asymptotically if and only if $A-G_1C$ is Hurwitz. That is, in the sense of mean values, the system \eqref{eq:obs_con} can be used to track the  quantum plant \eqref{eq:qsys} by making appropriate choice of the observer gain $G_1$, which implies the controller \eqref{eq:obs_con} preserves the key property as a coherent observer.

The corresponding doubled-up form of the closed-loop system is 
\begin{align*}
d\left[\begin{array}{c}
x\left(t\right)\\
\hat{x}\left(t\right)
\end{array}\right]=\bar{\mathcal{A}}_s\left[\begin{array}{c}
x\left(t\right)\\
\hat{x}\left(t\right)
\end{array}\right]dt+\bar{\mathcal{B}}_s\left[\begin{array}{c}
dw\left(t\right)\\
dz\left(t\right)\\
dv\left(t\right)
\end{array}\right]
\end{align*}
with
\begin{align*}
\bar{\mathcal{A}}_s&=\left[\begin{array}{cc}
A & B_{2}H\\
G_{1}C & A-G_{1}C+B_{2}H
\end{array}\right],\\
\bar{\mathcal{B}}_s&=\left[\begin{array}{ccc}
B_{1} & B_{2} & 0\\
G_{1} & G_{2} & G_{3}
\end{array}\right].
\end{align*}

In terms of quantum harmonic oscillators, distinct from classical systems, one can allow for direct coupling between the plant \eqref{eq:qsys} and the observer-based feedback controller \eqref{eq:obs_con}, that is, an interaction Hamiltonian
\begin{align}
\mathcal{H}_{c}=\frac{1}{2}x^{T}R_{c}\hat{x}+\frac{1}{2}\hat{x}^{T}R_{c}^{T}x
\label{eq:interH}
\end{align}
can be introduced with $R_c \in \mathbb{R}^{n_x \times n_x}$.

In the presence of direct coupling, the combined plant and controller feedback system is described by
\begin{align}
d\left[\begin{array}{c}
x\left(t\right)\\
\hat{x}\left(t\right)
\end{array}\right]=\mathcal{A}_s\left[\begin{array}{c}
x\left(t\right)\\
\hat{x}\left(t\right)
\end{array}\right]dt+\mathcal{B}_s\left[\begin{array}{c}
dw\left(t\right)\\
dz\left(t\right)\\
dv\left(t\right)
\end{array}\right]
\label{eq:comqoc}
\end{align}
with (e.g., see \cite{IRP14,JNP08})
\begin{align}
\mathcal{A}_s=&\bar{\mathcal{A}}_s+ 2\left[\begin{array}{cc}
\Theta_{x} & 0_{n_{x}}\\
0_{n_{x}} & \Theta_{x}
\end{array}\right]\left[\begin{array}{cc}
0_{n_x} & R_c\\
R_c^T & 0_{n_x}
\end{array}\right]\nonumber\\
=&\left[\begin{array}{cc}
A & B_{2}H+2\Theta_{n_{x}}R_{c}\\
G_{1}C+2\Theta_{n_{x}}R_{c}^{T} & A-G_{1}C+B_{2}H
\end{array}\right], \label{eq:comA}\\
\mathcal{B}_s =& \bar{\mathcal{B}}_s.
\end{align}

Now we present a lemma which gives a sufficient and necessary condition for independently designing the observer and feedback, in the presence of both filed and direct couplings.
\begin{lemma}
The closed-loop poles are the combination of the poles from the observer and the poles that would have resulted from using the same feedback on the true states, if and only if $R_c$ in \eqref{eq:interH} is symmetric. Specifically, the closed-loop polynomial $\Delta_{cl}(s)$ is given by
\begin{align}
\Delta_{cl}(s) &= \mathrm{det}\left(sI_{n_{x}}-A-B_{2}H-2\Theta_{n_{x}}R_{c}\right)\nonumber\\
&\times \mathrm{det}\left(sI_{n_{x}}-A+G_{1}C.+2\Theta_{n_{x}}R_{c}\right).
\label{eq: clpoly}
\end{align}
\end{lemma}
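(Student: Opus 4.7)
The plan is to find a similarity transformation that maps the closed-loop matrix $\mathcal{A}_s$ in \eqref{eq:comA} to block upper triangular form, so that its characteristic polynomial factors as in \eqref{eq: clpoly}. Motivated by the classical separation principle, the natural change of coordinates uses the observer error $e=x-\hat{x}$, encoded by the involutory block matrix
$$T = \begin{bmatrix} I_{n_x} & 0 \\ I_{n_x} & -I_{n_x} \end{bmatrix}, \qquad T^{-1}=T,$$
so that $T\,(x,\hat{x})^T=(x,e)^T$.

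First I would carry out $T\mathcal{A}_s T^{-1}$ by explicit block row and column operations on \eqref{eq:comA}. Left multiplication by $T$ replaces the second block row by (row 1)$-$(row 2) of $\mathcal{A}_s$; right multiplication by $T^{-1}=T$ replaces the first block column by (col 1)$+$(col 2) and negates the second block column. The (1,1) entry becomes $A+B_2H+2\Theta_{n_x}R_c$ and the (2,2) entry becomes $A-G_1C-2\Theta_{n_x}R_c$, matching the two factors appearing in \eqref{eq: clpoly}. The critical observation is that in the (2,1) position, the classical terms $A$, $G_1C$, and $B_2H$ that would otherwise be present all cancel, leaving exactly $2\Theta_{n_x}\bigl(R_c-R_c^T\bigr)$, which is the sole obstruction to block triangularity and arises entirely from the asymmetry of the direct-coupling matrix in the Hamiltonian \eqref{eq:interH}.

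For the sufficiency direction, if $R_c=R_c^T$ the (2,1) block vanishes and the transformed matrix is block upper triangular. Since similarity preserves the spectrum,
$$\Delta_{cl}(s)=\det\!\bigl(sI_{2n_x}-\mathcal{A}_s\bigr)=\det\!\bigl(sI_{2n_x}-T\mathcal{A}_s T^{-1}\bigr),$$
and the standard block-triangular determinant identity factors this into the product of the two diagonal-block characteristic polynomials, yielding exactly \eqref{eq: clpoly}. For the necessity direction, if $R_c\neq R_c^T$ then $2\Theta_{n_x}(R_c-R_c^T)\neq 0$, because $\Theta_{n_x}$ is invertible (indeed $\Theta_{n_x}^2=-I_{n_x}$), so the plant--error decoupling is destroyed; a direct expansion of $\det(sI-\mathcal{A}_s)$ picks up additional cross terms that generically prevent the clean separation in \eqref{eq: clpoly}.

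The main obstacle is keeping the block-matrix arithmetic organized so that the cancellation of $A$, $G_1C$, and $B_2H$ in the (2,1) position is transparent, and so that the signs of the $2\Theta_{n_x}R_c$ contributions in each diagonal block agree with \eqref{eq: clpoly}. Everything else is a routine application of the similarity argument familiar from the classical Luenberger separation principle, adapted to carry the extra direct-coupling correction $2\Theta_{n_x}R_c$.
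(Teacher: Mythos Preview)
Your approach is essentially identical to the paper's: the paper also passes to the $(x,e)$ coordinates and writes the resulting block matrix $\mathcal{A}_e$, obtaining exactly your transformed matrix with $(2,1)$ block $2\Theta_{n_x}(R_c-R_c^T)$, and then reads off the factorization when $R_c$ is symmetric. The only cosmetic difference is that the paper derives $\mathcal{A}_e$ by writing the mean error dynamics $\langle\dot e\rangle$ directly rather than phrasing it as the explicit similarity $T\mathcal{A}_s T^{-1}$; your necessity argument (nonzero $(2,1)$ block ``generically'' spoils the factorization) is at the same level of rigor as the paper's.
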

\begin{proof}
First, by \eqref{eq:comqoc} and \eqref{eq:comA}, we know that
\begin{align*}
\left\langle \dot{x}\left(t\right)\right\rangle =&A\left\langle x\left(t\right)\right\rangle+\left(B_2H+2\Theta_{n_{x}}R_{c}\right)\left\langle \hat{x}\left(t\right)\right\rangle, \\
\left\langle \dot{\hat{x} }\left(t\right)\right\rangle =&\left(G_{1}C+2\Theta_{n_{x}}R_{c}^{T}\right)\left\langle x\left(t\right) \right\rangle \\
&+\left(A-G_{1}C+B_2H\right)\left\langle \hat{x}\left(t\right) \right\rangle ,
\end{align*}
and thus the mean error dynamics are given by
\begin{align*}
\left\langle \dot{e}\left(t\right)  \right\rangle =&\left(A-G_{1}C-2\Theta_{n_{x}}R_{c}^{T}\right)\left\langle x \left(t\right)\right\rangle \nonumber\\
&-\left(A-G_{1}C-2\Theta_{n_{x}}R_{c}\right)\left\langle \hat{x}\left(t\right)\right\rangle .
\end{align*}
Then, with the inclusion of direct coupling, the doubled-up form concerning the error is described by
\begin{align*}
d\left[\begin{array}{c}
x\left(t\right) \\
e\left(t\right)
\end{array}\right]=\mathcal{A}_e\left[\begin{array}{c}
x\left(t\right) \\
e \left(t\right) 
\end{array}\right]dt+\mathcal{B}_e\left[\begin{array}{c}
dw\left(t\right)  \\
dz\left(t\right)  \\
dv \left(t\right)
\end{array}\right]
\end{align*}
with
\begin{align*}
\mathcal{A}_e &= \left[\begin{array}{cc}
A+B_{2}H+2\Theta_{n_{x}}R_{c} & -B_{2}H - 2\Theta_{n_{x}}R_{c}\\
2\Theta_{n_x}\left(R_c - R_c^T\right) & A-G_{1}C-2\Theta_{n_{x}}R_{c}
\end{array}\right],\\
\mathcal{B}_e &= \bar{\mathcal{B}}_e.
\end{align*}
Therefore, if and only if (note that $\Theta_{n_x}^{-1} = -\Theta_{n_x}$)
\begin{align*}
R_c = R_c^T,
\end{align*}
i.e.,
\begin{align}
\mathcal{A}_e = \left[\begin{array}{cc}
A+B_{2}H+2\Theta_{n_{x}}R_{c} & -B_{2}H- 2\Theta_{n_{x}}R_{c}\\
0 & A-G_{1}C-2\Theta_{n_{x}}R_{c}
\end{array}\right],
\end{align}
the closed-loop poles of the observed-based feedback control system consist of the poles due to the pole-placement design alone and the poles due to the observer design alone. This means that the pole-placement design and the observer design of \eqref{eq:comqoc} are independent of each other, and vice versa.
\end{proof}

The eigenvalues of matrix $A+B_2H+2\Theta_{n_x}R_c$ are called the regulator poles \cite{ZDG95}.
It is known that the transient response of a linear system is related to the locations of these regulator poles \cite{KO10, GGS00}.
Constraining the poles to lie in a prescribed region, by bounding the maximum overshoot, the frequency of oscillatory modes, the delay time, the rise time, and the settling time, can ensure a satisfactory transient response \cite{KO10}. 

The following theorem shows how to design an observer-based feedback controller of the form \eqref{eq:comqoc} to place the poles at the desired locations step by step.
\begin{theorem} Assume $(A - 2\Theta_{n_x}R_c, C)$ is detectable and $(A + 2\Theta_{n_x}R_c , B_2)$ is controllable with $R_c$ symmetric. Given a desired set of poles for the plant-observer system, there always exists a coherent observer-based feedback controller \eqref{eq:comqoc}, which can realise the given pole configuration.
\end{theorem}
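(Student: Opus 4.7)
The plan is to combine Lemma~1 with classical pole-assignment theory and a physical realisability argument, proceeding in three stages. For \emph{Stage 1 (separation of the spectrum)}: since $R_c$ is symmetric by hypothesis, Lemma~1 applies and the closed-loop characteristic polynomial factors as
\begin{align*}
\Delta_{cl}(s) = \det\bigl(sI_{n_{x}}-(A+2\Theta_{n_{x}}R_{c})-B_{2}H\bigr)\cdot\det\bigl(sI_{n_{x}}-(A-2\Theta_{n_{x}}R_{c})+G_{1}C\bigr),
\end{align*}
so I would partition the $2n_{x}$ desired closed-loop poles into a regulator set and an observer set and design $H$ and $G_{1}$ so that each factor carries the corresponding set.

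For \emph{Stage 2 (independent gain design)}: controllability of $(A+2\Theta_{n_{x}}R_{c},B_{2})$ lets me invoke the classical state-feedback pole-assignment theorem (for instance via Ackermann's formula) to produce a real $H$ placing the spectrum of $A+2\Theta_{n_{x}}R_{c}+B_{2}H$ at the chosen regulator locations. Dually, detectability of $(A-2\Theta_{n_{x}}R_{c},C)$ allows me to choose a real $G_{1}$ so that $A-2\Theta_{n_{x}}R_{c}-G_{1}C$ is Hurwitz with its assignable poles at the prescribed observer locations (any unobservable modes are automatically stable by detectability and are absorbed into the desired spectrum). These two designs are completely independent by Stage 1.

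For \emph{Stage 3 (physical realisability)}: with $H$ and $G_{1}$ fixed, $F=A-G_{1}C+B_{2}H$ is determined and the second identity in \eqref{eq:OCPR} pins $G_{2}=\Theta_{n_{x}}H^{T}\Theta_{n_z}$. The only remaining freedom lies in $G_{3}$ and the noise dimension $n_{v}$, and the first identity in \eqref{eq:OCPR} reduces to
\begin{align*}
G_{3}\Theta_{n_{v}}G_{3}^{T}=-\bigl(F\Theta_{n_{x}}+\Theta_{n_{x}}F^{T}+G_{1}\Theta_{n_{w}}G_{1}^{T}+G_{2}\Theta_{n_z}G_{2}^{T}\bigr).
\end{align*}
The right-hand side is manifestly real and skew-symmetric, and any such matrix admits a real normal form whose blocks are scalar multiples of $J$; consequently a real $G_{3}$ and a sufficiently large even $n_{v}$ can always be chosen to solve this equation, supplying the auxiliary noise channels $dv$ needed to preserve the canonical commutation relations of the controller.

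The main obstacle is Stage 3. Stages 1 and 2 are essentially classical results applied to the shifted matrices $A\pm 2\Theta_{n_{x}}R_{c}$, so once Lemma~1 is in hand the gain design is routine. The genuinely quantum step is showing that the Luenberger-style choices of $H$ and $G_{1}$ can always be completed to a physical QSDE. The crucial observation is that $G_{3}$ affects neither the mean dynamics nor the factorisation in Lemma~1, so enlarging the noise dimension to accommodate whatever skew-symmetric residual appears is cost-free at the level of poles while simultaneously restoring physical realisability; this is what turns the classical pole-placement construction into a valid quantum controller.
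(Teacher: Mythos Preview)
Your proposal is correct and follows essentially the same three-stage architecture as the paper: invoke Lemma~1 for the separation, use controllability and detectability of the shifted pairs to design $H$ and $G_{1}$ independently, then complete the controller to a physically realisable system by choosing $G_{3}$ and the auxiliary noise dimension $n_{v}$. The only cosmetic difference is in Stage~3, where the paper constructs the missing coupling explicitly via a vector $\Lambda_{v}$ satisfying $\Lambda_{v}^{\dagger}\Lambda_{v}=\text{(known Hermitian expression)}+\Xi_{v}$ with $\Xi_{v}$ a free real symmetric matrix chosen to make the right-hand side positive semidefinite, whereas you argue directly from the real skew-symmetric normal form of the residual; both routes yield the same conclusion.
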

\begin{proof}
First, because $(A - 2\Theta_{n_x}R_c, C)$ is detectable, $A-G_1C-2\Theta_{n_x}R_c$ can be made Hurwitz by appropriately choosing the matrix $G_1$.

Second, due to the controllability of  $(A + 2\Theta_{n_x}R_c , B_2)$, then the poles of $A+B_2H+2\Theta_{n_x}R_c$ can be placed at the prescribed locations  by appropriately choosing the matrix $H$. Furthermore, $G_2$ is automatically determined by \eqref{eq:OCPR} as
\begin{align*}
G_2 = \Theta_{n_x}H^T\Theta_{n_z}.
\end{align*} 

Note that so far $G_1$, $G_2$, $H$ are determined, and $F$ is determined according to \eqref{eq:formF}. However, if the following algebraic constraint
\begin{align*}
F\Theta_{n_{x}}+\Theta_{n_{x}}F^{T}+G_1\Theta_{n_{w}}G_1^{T}+G_2\Theta_{n_z}G_2^{T}=0
\end{align*}
is not satisfied, the controller \eqref{eq:comqoc}  cannot be physically realised \cite{JNP08,MJ12}. Therefore, we inject additional quantum noise $v\left(t\right)$ to the controller to guarantee that the physical realisability conditions \eqref{eq:OCPR} hold. To be specific, if the coupling operator corresponding to $v\left(t\right)$ is denoted $L_v = \Lambda_v \hat{x}$, then $\Lambda_v$ is any $\frac{n_{v}}{2}\times n_{x}$ complex matrix satisfying
\begin{align}
\Lambda_v^{\dagger}\Lambda_v=&-\frac{i}{4}\Theta_{n_x}F-\frac{i}{4}F^{T}\Theta_{n_{x}}\nonumber\\
&+\frac{i}{4}\Theta_{n_{x}}G_1\Theta_{n_{w}}G_1^{T}\Theta_{n_{x}}\nonumber\\
&+\frac{i}{4}\Theta_{n_{x}}G_2\Theta_{n_z}G_2^{T}\Theta_{n_{x}}+\Xi_v
\label{eq:lv}
\end{align}
where $\Xi_v$ is any real symmetric $n_{x}\times n_{x}$ matrix such that the right hand side of (\ref{eq:lv}) is nonnegative definite.

Finally, $G_3$ can be explicitly given by (e.g., see \cite{JNP08,MJ12,MEPUJ13})
\begin{align}
G_3=2i\Theta_{n_{x}}\left[\begin{array}{cc}
-\Lambda_v^{\dagger} & \Lambda_v^{T}\end{array}\right]\Gamma_{n_v}.
\label{eq:G3form}
\end{align}
where $\Lambda_v$ is determined by \eqref{eq:lv}.
Here 
\begin{align*}
\Gamma_{n_v}=P_{n_v} I_{\frac{n_v}{2}} \otimes M,
\end{align*}
and $P_{n_v}$ is an $n_v\times n_v$  permutation matrix so that if we consider a column vector $a=\left[\begin{array}{cccc}a_{1} & a_{2} & \cdots & a_{n_v}\end{array}\right]^{T}$,
\begin{align*}
&P_{n_v}a\\
&=\left[\begin{array}{cccccccc}a_{1} & a_{3} & \cdots & a_{n_v-1} & a_{2} & a_{4} & \cdots & a_{n_v}\end{array}\right]^{T}.
\end{align*}
Therefore, a coherent observer-based feedback controller describe by \eqref{eq:comqoc} which solves the pole-placement problem can always be designed following the above procedure, under the assumptions given in this theorem. 
\end{proof}
\begin{remark}
If neither $\left(A,C\right)$ is detectable nor $\left(A,B_2\right)$ is controllable, then by including appropriate direct coupling between the plant and controller, the detectability and controllability assumptions in Theorem 1 can be satisfied. Furthermore, when the controller is required to be of a specific structure (e.g. a simple cavity), a greater degree of freedom is obtained for the pole-placement design by involving direct coupling so that the poles can be placed as desired.
\end{remark}

\section{Numerical examples}
\label{sec:IE}
For a second-order system with poles $z = -\zeta \omega_n \pm i\omega_d$, the transient response is characterised by the undamped natural frequency $\omega_n$, the damping ratio $\zeta$ and the damped natural frequency $\omega_d$ \cite{KO10}. In terms of quantum optical devices, these parameters are related to physical quantities such as resonant frequencies and dissipation rates. The dynamics of the quantum plant considered here are
\begin{subequations}
\label{eq:exp}
\begin{align}
dx=&\left[\begin{array}{cc}
0 & \Delta\\
-\Delta & 0
\end{array}\right]xdt+ \left[\begin{array}{cc}
0 & 0\\
0 & -2\sqrt{\kappa_{1}}
\end{array}\right]dw_1\nonumber\\
+&\left[\begin{array}{cc}
0 & 0\\
0 & -2\sqrt{\kappa_{2}}
\end{array}\right]du,\\
dy=&\left[\begin{array}{cc}
2\sqrt{\kappa_{1}} & 0\\
0 & 0
\end{array}\right]xdt+dw_{1}
\end{align}
\end{subequations}
with the detuning $\Delta = 0.1$ and the decay rates $\kappa_1 = \kappa_2 = 0.01$. This plant may be thought of as representing the scenario of an atom trapped between two mirrors of a cavity in the strong coupling limit in which the cavity dynamics are adiabatically eliminated \cite{Doherty:1999, NJP09, HLWHM11}.

1) In the following pole-placement design problem, we confine the closed-loop poles to the predetermined region specified by (as depicted by the shaded area in Fig. \ref{fig:ex}) a minimum decay rate $\alpha$, a minimum damping ratio $\zeta=\cos\theta$, and a maximum damped natural frequency $\omega_d = r\sin\theta$. 
\begin{figure}[!htp]
\centering
\includegraphics[scale=.5]{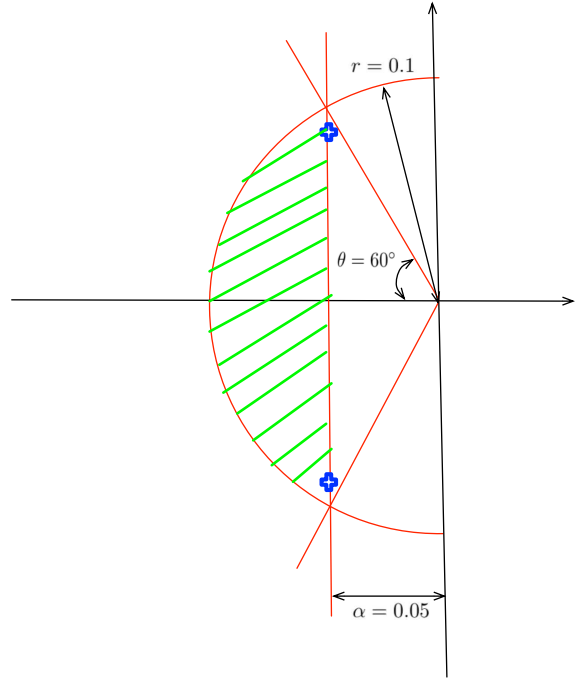}
\caption{The prescribed region specified by $r \leqslant 0.1$, $\alpha \geqslant 0.05$ and $\theta \leqslant 60^{\circ}$. Only by allowing for both direct and indirect couplings can the closed-loop poles be placed at the blue points.}
\label{fig:ex}
\end{figure} 

In order to achieve a simple structure of the observer-based coherent feedback controller, we further require the coupling operators are proportional to the annihilation operator of the controller. Hence, we have $G_1 = g_1I_2$, $H = hI_2$. Let $R_{c}=\left[\begin{array}{cc}
0 & r_{c}\\
r_{c} & 0
\end{array}\right]$. Here, $g_1$, $h$ and $r_c$ are all real numbers. Note that the quantum plant \eqref{eq:exp} is controllable and detectable.

In the absence of direct coupling, by solving the characteristic equation
\begin{align*}
\mathrm{det}\left(zI_{2}-\left(A+B_{2}H\right)\right)=z^{2}+0.2hz+0.01=0,
\end{align*} 
one can obtain the closed-loop poles as
\begin{align*}
z = -0.1h \pm i0.1\sqrt{1-h^2}.
\end{align*}
According to the design specifications, we conclude  $ 0.5  \leqslant h \leqslant 1$. But because $|z| \equiv 0.1$, only the arc in Fig. \ref{fig:ex} but not the whole shaded area can be covered.

2) Now we consider the case where direct coupling between the plant and the controller are allowed, by involving $R_c$ in the coupled dynamics. The characteristic equation becomes
\begin{align*}
&\mathrm{det}\left(zI_{2}-\left(A+B_{2}H+2JR_c\right)\right)\\
&=z^{2}+0.2hz+0.01-0.4hr-4r^2=0.
\end{align*}
We set $h = 0.5$ and $r_c = 0.01$, then the closed-loop poles are
\begin{align*}
z = -0.05 \pm i0.0714
\end{align*}
corresponding to the blue points in Fig. \ref{fig:ex}. In fact, the whole prescribed region is reachable in the presence of direct coupling. One may choose $g_1 = 1$, and the poles of the observer are $-0.166, -0.034$. Then by \eqref{eq:formF},
\begin{align*}
F = A + B_2H - G_1C = \left[\begin{array}{cc}
-0.2 & 0.1\\
-0.1 & -0.1
\end{array}\right].
\end{align*}
Finally, $G_3$ is determined by the physical realisability conditions \eqref{eq:OCPR} as $G_3 = \left[\begin{array}{cc}
-1.9 & 0\\
0 & -0.5
\end{array}\right]$. All the coefficients of the observer-based coherent feedback controller have been determined with the regulator poles placed at the desired locations. 



%

\section{Conclusions}
\label{sec:C}
A systematic approach to the pole-placement problem is provided for linear quantum stochastic systems in this paper, using coherent observers via field and direct couplings. Theorem 1 shows, in this framework, how the observer and feedback can be designed separately and combined to form the observed-state feedback control system step by step. We demonstrate in the context of examples that, by placing the poles of the closed-loop system to the prescribed regions, the transient response performance is improved. Future work includes applying the technique proposed here to quantum squeezing of motion in optomechanical systems, and conducting experiments accordingly.

\bibliographystyle{plain}
\bibliography{CDC2015}
\end{document}